\documentclass[12pt,a4paper]{article}
\pdfoutput=1

\usepackage[utf8]{inputenc}
\usepackage[T1]{fontenc}
\usepackage{lmodern}
\usepackage[margin=1in]{geometry}
\usepackage{amsmath,amssymb,amsthm}
\usepackage{graphicx}
\usepackage{booktabs}
\usepackage{enumitem}
\usepackage{hyperref}
\usepackage{natbib}
\usepackage{setspace}
\usepackage{caption}
\usepackage{float}

\graphicspath{{../../04_Simulations/results/}}

\newtheorem{proposition}{Proposition}

\title{AI as Coordination-Compressing Capital:\\
  Task Reallocation, Organizational Redesign,\\
  and the Regime Fork}
\author{Alex Farach\thanks{Email: \href{mailto:alexfarach@gmail.com}{alexfarach@gmail.com}. The author acknowledges the use of AI assistants for coding, simulation implementation, and draft preparation. All theoretical contributions, modeling decisions, and interpretive claims are solely the author's.}}
\date{March 2026}

\begin{document}
\maketitle

\begin{abstract}
Task-based models of AI and labor hold organizational structure fixed. We introduce \emph{agent capital}: AI that reduces coordination costs, expanding spans of control and enabling endogenous task creation. Five propositions characterize how coordination compression affects output, hierarchy, manager demand, wage dispersion, and the task frontier. The model generates a regime fork: the same technology produces broad-based gains or superstar concentration depending on who benefits from coordination compression. Simulations with heterogeneous workers confirm sharp regime divergence. Within the simulated firm, inequality falls in all regimes as coordination compression expands employment, but the manager--worker wage gap widens universally. The distributional impact hinges on who controls organizational elasticity.
\end{abstract}

\medskip
\noindent\textbf{JEL Codes:} D23, J24, L23, O33

\noindent\textbf{Keywords:} Artificial intelligence, coordination costs, organizational design, span of control, task-based models, inequality

\section{Introduction}

The dominant framework for analyzing AI's labor market effects is task-based: technology changes which tasks are feasible, who performs them, and at what cost. \citet{acemoglu2019} formalize this as a displacement-reinstatement dynamic, where automation removes workers from existing tasks while new task creation reinstates labor demand. \citet{agrawal2026} refine this approach, showing that AI can enhance worker productivity without automating tasks---by expanding the technologically feasible task set ($\bar{\tau}$) or broadening individual workers' feasible task sets ($\Omega(i)$). Their key insight is that productivity (captured by geometric-mean uniformity of the task distribution) and inequality (captured by Shannon entropy, which measures distributional uniformity and thus moves inversely with inequality) are different functionals of the same equilibrium distribution and can move independently or even in opposite directions.

These models share a structural limitation: they hold organizational architecture fixed. Workers are reallocated across tasks, but the hierarchy through which work is coordinated---spans of control, management layers, reporting structures---is treated as exogenous. This matters because a growing body of evidence suggests that AI's organizational effects may be as consequential as its task-level effects. \citet{ewens2025} document that U.S.\ public firms flattened their hierarchies following AI adoption, reducing management layers across a sample of more than 3{,}100 firms. \citet{babina2025} find a parallel pattern: firms investing in AI shifted toward flatter structures with proportionally fewer mid- and senior-level employees. \citet{bloom2014} show that information technology and communication technology have distinct and sometimes opposing effects on firm organization---information technology (IT) decentralizes while communication technology (CT) centralizes---suggesting that the organizational channel deserves explicit modeling. \citet{bresnahan2002} find that IT investment raises productivity only when accompanied by complementary organizational restructuring, implying that the reorganization channel may dominate the direct task-level channel. A related limitation concerns measurement: standard growth accounting absorbs AI's contribution into TFP, obscuring its role as a distinct input rather than an augmentation of existing capital or labor~\citep{farach2025}.

Recent papers embed AI in \citeauthor{garicano2000}'s (\citeyear{garicano2000}) knowledge hierarchy but hold organizational costs fixed. \citet{ide2025} show that autonomous AI benefits the most knowledgeable while non-autonomous AI benefits the least knowledgeable, but their communication cost parameter is exogenous---AI reshuffles occupational roles within an unchanged organizational architecture. \citet{xu2025} show that generative AI can paradoxically narrow spans of control through induced deskilling, but again hold communication costs fixed while varying the knowledge partition between workers and managers. \citet{cheng2025} model AI replacement in teams with sequential information flow, showing that positional structure---not just skill level---determines displacement risk. Our approach is complementary: we hold the knowledge partition fixed and vary the coordination cost itself. This is precisely the channel that \citet{garicano2006} showed has first-order effects on inequality---reductions in communication costs widen spans, compress hierarchies, and concentrate returns at the top---but that neither \citet{ide2025} nor \citet{xu2025} parameterizes as a function of AI capability.

This paper makes three contributions. First, we introduce \emph{agent capital} ($K_A$) as a distinct production input that reduces coordination costs within organizations. Unlike task-level AI, which changes what workers can do, agent capital changes how work is organized---compressing the coordination friction $c(K_A)$ that determines spans of control, hierarchy depth, and the boundary between managerial and production labor. Second, we derive five formal propositions characterizing the comparative statics of coordination compression: output rises (Proposition~1), spans of control expand (Proposition~2), manager demand falls (Proposition~3), wage dispersion rises under elite complementarity (Proposition~4), and the task frontier expands endogenously (Proposition~5). Third, these propositions generate a \emph{regime fork}. When agent capital is broadly accessible (low $\beta$), coordination compression produces broad-based productivity gains and, within the firm, reduces inequality through employment expansion---a ``general infrastructure'' equilibrium. When agent capital disproportionately amplifies high-skill managers (high $\beta$), the same technology produces superstar concentration---an ``elite complementarity'' equilibrium. The distributional consequences of AI depend critically on which regime obtains.

We develop this framework using a \emph{newsroom} as a motivating illustration: an Editor-in-Chief, Section Editors, and Reporters organized along a task continuum from routine briefs to elite editorial strategy. This setting grounds the model's abstractions in a concrete organizational context---coordination friction, span constraints, and the tension between hierarchy and efficiency---and motivates the regime fork through the divergent ways coordination compression can reshape a single organization.

The paper proceeds as follows. Section~2 presents the model. Section~3 states five propositions with proofs. Section~4 introduces the regime fork and reports simulation evidence across four parameter regimes. Section~5 discusses testable predictions, policy implications, and limitations. Section~6 concludes.

\section{Model}

\subsection{Production and Coordination}

Consider a hierarchical firm with production workers and managers. To fix ideas, aggregate output takes the familiar Cobb--Douglas form $Y = A \cdot L_{\text{eff}}^\alpha \cdot M^{1-\alpha}$, where $L_{\text{eff}}$ is effective labor, $M$ is management, and $\alpha \in (0,1)$ is the labor share. We use this aggregate expression as a motivating framework; the formal model operates at the \emph{team level}.

Each manager $i$ oversees a team of $n_i$ workers. Team output is:
\begin{equation}\label{eq:production}
  Y_i = A \cdot L_{\text{eff},i}^\alpha \cdot m_i^{1-\alpha}, \qquad m_i = 1
\end{equation}
where $A > 0$ is total factor productivity and $m_i = 1$ reflects that each team has exactly one manager. Since $m_i = 1$, team output simplifies to $Y_i = A \cdot L_{\text{eff},i}^\alpha$, but the two-input structure motivates the wage-sharing rule: the manager receives the $(1-\alpha)$ share of team output as a residual claimant, and workers share the $\alpha$ share (see below). Aggregate output is $Y = \sum_i Y_i$. The term $L_{\text{eff},i}$ represents \emph{effective labor}: raw labor discounted by coordination friction. As teams grow, coordination overhead increases---each additional worker requires managerial attention proportional to the per-worker coordination cost $c$. This captures the empirical regularity that large teams face diminishing returns from coordination losses~\citep{bolton1994}.

Agent capital $K_A$ reduces coordination costs:
\begin{equation}\label{eq:coordination-cost}
  c(K_A) = \frac{c_0}{1 + \gamma \cdot K_A}
\end{equation}
where $c_0 > 0$ is the baseline coordination friction and $\gamma > 0$ is the \emph{coordination compression parameter}---the effectiveness of agent capital at reducing friction. When $K_A = 0$, coordination cost is $c_0$. As $K_A \to \infty$, coordination cost approaches zero. We model $K_A$ as a distinct factor rather than augmenting $K$ or $L$ because digital labor's defining properties---scalability, non-rivalry, and context-dependent substitutability with human labor~\citep{farach2025}---make it irreducible to either traditional capital (which depreciates predictably and is rivalrous) or human labor (which cannot be replicated at near-zero marginal cost).

When workers are heterogeneous---with individual skills $q_j$ varying across the task continuum, as in \citeauthor{agrawal2026}'s (\citeyear{agrawal2026}) $s(\tau)$ framework---effective labor for manager $i$'s team is:
\begin{equation}\label{eq:effective-labor}
  L_{\text{eff},i} = \frac{Q_i}{1 + c_i \cdot n_i}
\end{equation}
where $Q_i = \sum_{j \in \text{team}_i} q_j$ is the sum of worker skills on team $i$ (team quality) and $n_i$ is the team's headcount. Note that $Q_i$ appears only in the numerator while $n_i$ appears only in the coordination penalty $1 + c_i n_i$ in the denominator: coordination friction depends on the number of communication links---a function of headcount---not on worker skill. Quality enters through the numerator: a team of 10 skilled workers produces more effective labor than a team of 10 unskilled workers facing the same coordination overhead. The homogeneous case (all $q_j = 1$) reduces to the baseline $L/(1 + cL)$.

The two-input structure~\eqref{eq:production} motivates a Cobb--Douglas sharing rule: the manager receives $(1-\alpha)Y_i$ as a residual claimant, and the worker share $\alpha Y_i$ is divided among team members in proportion to their skill contribution, so worker $j$ on team $i$ receives $(q_j / Q_i) \cdot \alpha \cdot Y_i$. Since $m_i = 1$ is a discrete constraint (not a variable input), this is a rent-sharing convention, not competitive factor pricing; we adopt it for tractability and because the $(1-\alpha, \alpha)$ split is the natural division under Cobb--Douglas structure.

We present the model in terms of human managers coordinating human workers---the setting for which the richest organizational theory exists \citep{garicano2000, garicano2006}. But the coordination-cost structure applies more broadly. A worker orchestrating a team of AI agents faces the same friction: each additional agent requires prompting, monitoring, and integration, with overhead proportional to the number of concurrent workstreams. In this interpretation, $K_A$ reduces the cost of \emph{agent orchestration} rather than \emph{human management}, $s_i$ indexes the worker's coordination capability rather than managerial skill, and ``span of control'' generalizes to the number of concurrent processes---human or digital---a decision-maker can effectively direct. The propositions and regime fork hold in either interpretation; we develop the human-hierarchy case because it connects to existing empirical evidence \citep{ewens2025} and organizational theory.

\paragraph{Notation.} Throughout, $s_i$ denotes managerial skill and $q_j$ denotes worker skill. The index $i$ refers to managers, $j$ to workers.

\subsection{Span of Control}

The coordination cost function implies a natural \emph{coordination capacity}. We define:
\begin{equation}\label{eq:span}
  S_i(K_A) = \frac{1}{c_i(K_A)}
\end{equation}
as the maximum team size manager $i$ can effectively coordinate, given the technology of coordination. This is a technological parameter---a capacity constraint---not the solution to a manager's optimization problem. Two closures are possible: one could treat team size as an interior choice where the manager trades off the marginal worker's quality contribution against coordination drag in $L_{\text{eff}}$, or one could impose the capacity constraint $n_i \leq S_i$ as binding. We adopt the latter throughout. Specifically, we assume labor is sufficiently abundant that each manager's team reaches the capacity constraint $S_i$. This holds when the total worker pool $N$ exceeds total supervisory demand $\sum S_i$, or equivalently when outside options are low enough that workers prefer team assignment at the prevailing wage share. Under this closure, $L_{\text{eff},i}$ in~\eqref{eq:effective-labor} gives team output conditional on $n_i = S_i$; spans are set by coordination technology, not by an interior optimality condition. Note an important implication: when $n_i = S_i = 1/c_i$, the denominator of~\eqref{eq:effective-labor} becomes $1 + c_i/c_i = 2$, so $L_{\text{eff},i} = Q_i/2$ regardless of $c_i$. At the binding capacity constraint, coordination compression raises output not by reducing friction for a fixed team, but by expanding the team itself: lower $c_i$ increases $S_i$, admitting more workers, which raises $Q_i$. Proposition~\ref{prop:output}'s ``holding team allocations fixed'' qualifier is essential---it establishes the output effect for a given team size, while the capacity-constraint closure channels the practical effect through team expansion.\footnote{An equivalent micro-foundation: each coordination link consumes $c_i$ units of the manager's unit time endowment, yielding the constraint $c_i \cdot n_i \leq 1$, i.e., $n_i \leq 1/c_i = S_i$. This capacity constraint, rather than diminishing returns in $L_{\text{eff}}$, is what bounds team size.} In the homogeneous case, $S = (1 + \gamma K_A)/c_0$, which is linear in $K_A$ and increasing in $\gamma$. In the newsroom: at baseline ($K_A = 0$), a section editor with $c_0 = 0.3$ can coordinate $S = 1/0.3 \approx 3.3$ reporters. At $K_A = 5$ with $\gamma = 1$, capacity rises to $S = (1+5)/0.3 = 20$. The editor-in-chief can now oversee the entire reporting staff directly---section editors become redundant.

In the homogeneous case, the number of hierarchical layers required for $N$ workers is:
\begin{equation}\label{eq:layers}
  \text{Layers}(N, S) = \lceil \log(N) / \log(S) \rceil
\end{equation}
As $S$ rises, layers compress. This is not task substitution---it is \emph{hierarchy substitution}.

\subsection{Elite Complementarity}

Not all managers benefit equally from agent capital. We model heterogeneous returns through an elite complementarity parameter $\beta \geq 0$. Manager $i$ with skill $s_i \in (0, 1]$ has effective agent capital:
\begin{equation}\label{eq:elite}
  K_{\text{eff},i} = K_A \cdot s_i^\beta
\end{equation}
When $\beta$ is low (e.g., 0.2), $s_i^\beta \approx 1$ for all managers---agent capital is \emph{general infrastructure}, benefiting everyone roughly equally. Note that for $\beta \in (0, 1)$ and $s_i \in (0, 1)$, the concave mapping $s_i \mapsto s_i^\beta$ actually \emph{compresses} skill differences: a manager with $s_i = 0.1$ and $\beta = 0.2$ receives $K_{\text{eff}} = K_A \cdot 0.63$, well above their raw skill. This is a substantive assumption: low $\beta$ implies AI actively equalizes coordination capacity across managers. When $\beta$ is high (e.g., 3.0), $s_i^\beta$ is steeply convex---only top-skill managers extract substantial benefit. This captures the empirical distinction between AI as a broadly accessible productivity tool versus AI as a skill-amplifying complement to elite decision-makers.

Substituting~\eqref{eq:elite} into~\eqref{eq:coordination-cost}, each manager's coordination cost becomes:
\begin{equation}\label{eq:ci}
  c_i(K_A) = \frac{c_0}{1 + \gamma \cdot K_A \cdot s_i^\beta}
\end{equation}
Equation~\eqref{eq:coordination-cost} is the homogeneous special case ($\beta = 0$ or identical skills). All subsequent propositions use the manager-specific $c_i$.

The distinction echoes \citet{garicano2006}, who show that reductions in communication costs widen inequality (enabling top managers to leverage larger spans) while reductions in knowledge-acquisition costs compress it (making workers more similar). Our $\beta$ parameter indexes which force dominates.

\subsection{Endogenous Task Creation}

Coordination compression may expand the feasible set of organizational tasks. Activities that were previously too costly to coordinate---newsletter verticals, podcast production, interactive graphics, real-time analytics---become viable as coordination friction falls. The empirical relevance of endogenous task creation is well documented: \citet{autor2024} show that the majority of contemporary employment is in job categories that did not exist in 1940, with new work shifting from middle-skill to high-skill occupations after 1980. We model this as:
\begin{equation}\label{eq:frontier}
  T(K_A) = T_0 \cdot (1 + \delta \cdot K_A)
\end{equation}
where $T_0$ is the baseline task frontier and $\delta \geq 0$ is the \emph{task creation elasticity}.\footnote{The linear specification is chosen for analytical transparency. The employment result $\partial E / \partial K_A \geq 0$ (Proposition~5) requires only that $T'(K_A) > 0$---i.e., that the task frontier is weakly expanding---and holds for any concave alternative such as $T(K_A) = T_0(1 + \delta \log(1 + K_A))$. Linearity is a convenience, not a load-bearing assumption; diminishing returns to task creation would attenuate the rate of employment growth without altering the qualitative result.} When $\delta = 0$, the task space is fixed---coordination compression only reshuffles existing work. When $\delta > 0$, new tasks absorb workers who might otherwise be displaced by span expansion, potentially offsetting the labor-demand reduction from hierarchy compression.

\subsection{Connection to the Task-Based Framework}

The model extends the \citet{agrawal2026} framework by adding an organizational channel. When $K_A = 0$ (no agent capital), organizational structure is fixed and only the task-reallocation channel operates: AI expands $\bar{\tau}$ or $\Omega(i)$, redistributing workers across tasks within a given hierarchy. The specific production structures differ---\citeauthor{agrawal2026} model a task continuum with comparative advantage, while we model manager-team pairs with coordination friction---but both hold organizational architecture exogenous. When $K_A > 0$, coordination compression adds a structural channel. \citeauthor{agrawal2026}'s equilibrium descriptors---geometric-mean uniformity ($\exp(g)$) for productivity and Shannon entropy ($h$) for inequality---remain valid summary statistics, but they become endogenous to organizational structure rather than determined solely by worker reallocation within a fixed architecture.

\begin{table}[ht]
  \centering
  \caption{Variable Definitions}\label{tab:variables}
  \begin{tabular}{@{}lll@{}}
    \toprule
    Variable & Definition & Source \\
    \midrule
    $K_A$ & Agent capital (coordination-compressing AI capability) & Model construct \\
    $c_i(K_A)$ & Coordination friction: $c_0 / (1 + \gamma \cdot K_A \cdot s_i^\beta)$ & Model construct \\
    $\gamma$ & Coordination compression effectiveness & Parameter \\
    $\beta$ & Elite complementarity exponent & Parameter \\
    $\delta$ & Task creation elasticity & Parameter \\
    $S_i(K_A)$ & Coordination capacity: $1 / c_i(K_A)$ & Derived \\
    $T(K_A)$ & Task frontier: $T_0 \cdot (1 + \delta \cdot K_A)$ & Derived \\
    \bottomrule
  \end{tabular}
\end{table}

\begin{figure}[ht]
  \centering
  \includegraphics[width=\textwidth]{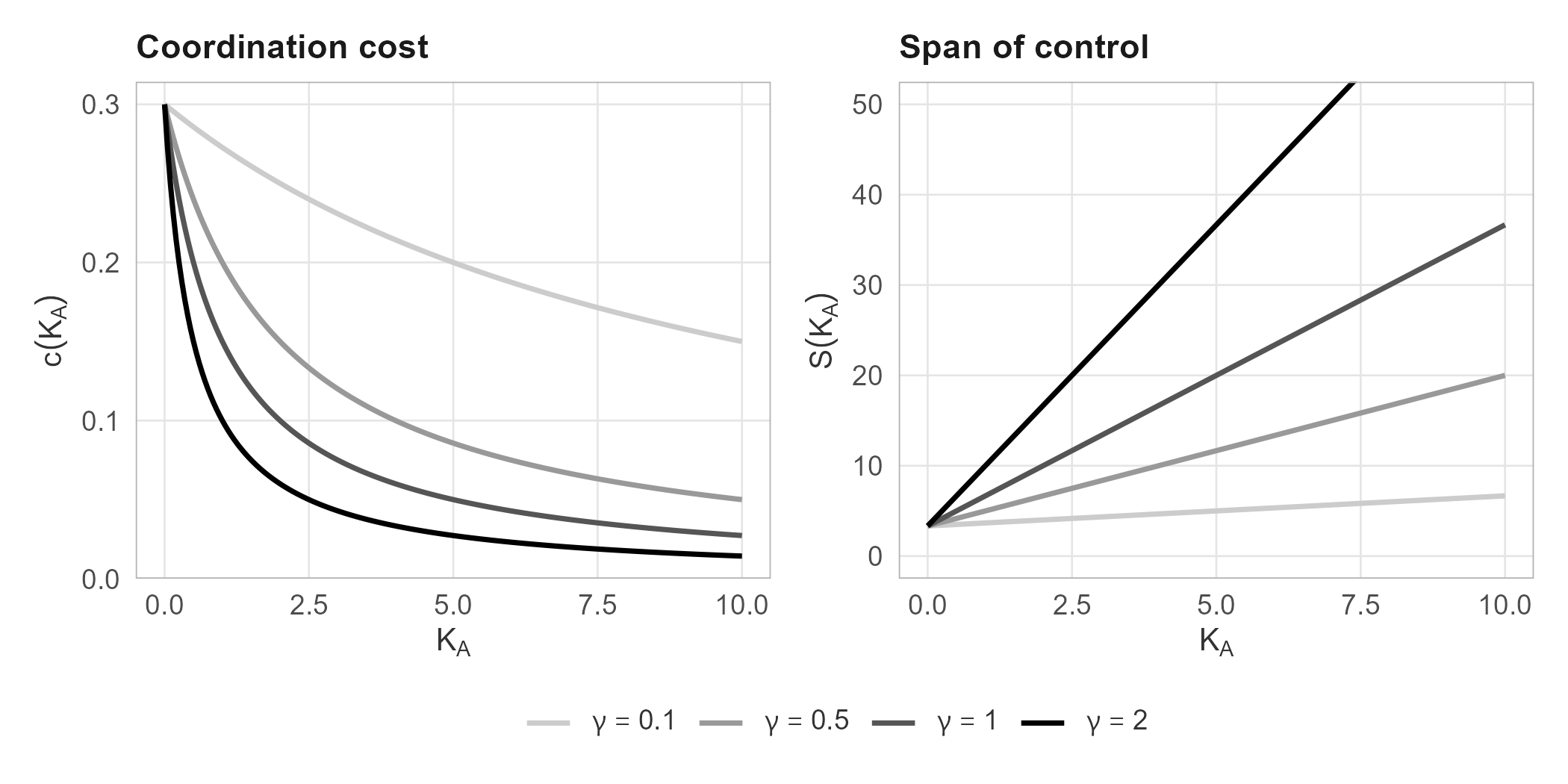}
  \caption{Coordination cost and span of control as functions of $K_A$ for varying $\gamma$. The left panel shows $c(K_A)$ falling hyperbolically; the right panel shows span expanding linearly. Higher $\gamma$ produces faster compression and wider spans.}\label{fig:compression-span}
\end{figure}

\section{Propositions}

We state five propositions characterizing the comparative statics of coordination compression. Figure~\ref{fig:compression-span} illustrates the core mechanism: coordination cost falls hyperbolically in $K_A$, expanding spans linearly. Propositions~1--2 are unconditional; Propositions~3--5 are conditional on parameter values, generating the regime fork.

\begin{proposition}[Output Effect]\label{prop:output}
  If $\gamma > 0$, then holding team allocations fixed, $\partial Y / \partial K_A > 0$: output is strictly increasing in agent capital.
\end{proposition}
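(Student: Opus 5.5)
The plan is a direct chain-rule computation at the team level, followed by summation over teams. Fix the team allocations: for each manager $i$, hold fixed the team membership, hence both $n_i$ and $Q_i=\sum_{j\in\text{team}_i} q_j$, along with the manager's skill $s_i\in(0,1]$. Aggregate output is $Y=\sum_i Y_i$ with $Y_i = A\,L_{\text{eff},i}^\alpha$ by \eqref{eq:production}. It therefore suffices to show that each $Y_i$ is nondecreasing in $K_A$ and that at least one is strictly increasing. Under nontrivial allocations, every team has $n_i\ge 1$ and $Q_i>0$, so in fact every $Y_i$ will be strictly increasing.

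The derivative decomposes into three factors,
\begin{equation*}
\frac{\partial Y_i}{\partial K_A} = \frac{\partial Y_i}{\partial L_{\text{eff},i}}\cdot\frac{\partial L_{\text{eff},i}}{\partial c_i}\cdot\frac{\partial c_i}{\partial K_A},
\end{equation*}
and I would sign each in turn.

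\begin{enumerate}
\item The first factor is $\alpha A L_{\text{eff},i}^{\alpha-1}>0$, since $A>0$, $\alpha\in(0,1)$ and $L_{\text{eff},i}>0$ whenever $Q_i>0$.
\item From \eqref{eq:effective-labor}, the second factor is $-Q_i n_i/(1+c_i n_i)^2<0$ whenever $n_i\ge1$ and $Q_i>0$.
\item From \eqref{eq:ci}, the third factor is $-c_0\gamma s_i^\beta/(1+\gamma K_A s_i^\beta)^2$. This is strictly negative because $c_0>0$, $\gamma>0$ and $s_i^\beta>0$; the last holds for $s_i\in(0,1]$ and any $\beta\ge0$.
\end{enumerate}

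The product has signs $(+)(-)(-)$, so $\partial Y_i/\partial K_A>0$. Summing over $i$ then gives $\partial Y/\partial K_A>0$. The homogeneous case \eqref{eq:coordination-cost} follows as the special case $s_i^\beta=1$.

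The only delicate step is interpretive rather than computational: the phrase ``holding team allocations fixed'' must be read as holding $n_i$ fixed, not setting $n_i=S_i(K_A)$. The paper already notes that at the binding constraint $n_i=1/c_i$ we get $L_{\text{eff},i}=Q_i/2$. Under that closure, $K_A$ would act only through $Q_i$, and the derivative above would not apply. I would state this reading explicitly. I would also flag the degenerate cases: an empty team, or $Q_i=0$, contributes zero derivative. Strictness therefore requires at least one team with $n_i\ge1$, which the team structure guarantees.
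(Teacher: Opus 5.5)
Your proposal is correct and follows essentially the same route as the paper's proof: sign $\partial c_i/\partial K_A<0$ from \eqref{eq:ci}, propagate it through $L_{\text{eff},i}=Q_i/(1+c_i n_i)$ and $Y_i=A\,L_{\text{eff},i}^\alpha$, and sum over teams. Your explicit statement of the nondegeneracy condition ($n_i\ge 1$, $Q_i>0$) and of the reading ``fixed $n_i$, not $n_i=S_i(K_A)$'' is a useful clarification; the paper's proof leaves these to the surrounding text.
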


\begin{proof}
Fix an arbitrary worker allocation $\{n_i, \text{team}_i\}$. Since $\partial c_i / \partial K_A = -c_0 \gamma s_i^\beta / (1 + \gamma K_A s_i^\beta)^2 < 0$, coordination cost falls as $K_A$ rises for every manager. Effective labor rises: $\partial L_{\text{eff},i} / \partial K_A = -Q_i \cdot n_i \cdot (\partial c_i / \partial K_A) / (1 + c_i n_i)^2 > 0$. Since $Y_i = A \cdot L_{\text{eff},i}^\alpha$ and $\alpha > 0$, each team's output rises, so $\partial Y / \partial K_A = \sum \partial Y_i / \partial K_A > 0$.
\end{proof}

\noindent\textbf{Remark.} Under any allocation mechanism that is at least weakly improving---choosing among allocations at least as good as the status quo---the result extends to endogenous reallocation. The proportional-allocation rule used in the simulation (\S\,\ref{sec:economy-wide}) is not globally optimal and can produce local output non-monotonicities when expanding the workforce admits low-skill workers whose coordination overhead exceeds their quality contribution, as documented in Section~\ref{sec:economy-wide}.

This holds regardless of $\beta$ and $\delta$. Any positive coordination compression raises output by reducing organizational friction. The mechanism is distinct from task-level productivity enhancement (the $\theta(\tau)$ channel in \citet{agrawal2026}): agent capital does not make any individual task more productive but makes the \emph{organization of tasks} more efficient.

\begin{proposition}[Span Expansion]\label{prop:span}
  Each manager's coordination capacity $S_i(K_A) = 1/c_i(K_A)$ is strictly increasing in $K_A$ for all $\gamma > 0$.
\end{proposition}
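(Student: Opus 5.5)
The plan is to substitute the manager-specific cost function~\eqref{eq:ci} into the definition~\eqref{eq:span} and obtain $S_i$ in closed form. Inverting $c_i(K_A) = c_0/(1+\gamma K_A s_i^\beta)$ gives
\begin{equation*}
  S_i(K_A) = \frac{1 + \gamma \, K_A \, s_i^\beta}{c_0},
\end{equation*}
which is affine in $K_A$. The homogeneous expression $S=(1+\gamma K_A)/c_0$ from the Span of Control subsection is the special case $\beta=0$ or $s_i=1$. No optimization is involved, because $S_i$ is defined as a technological capacity rather than a chosen team size.

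The next step is to differentiate:
\begin{equation*}
  \frac{\partial S_i}{\partial K_A} = \frac{\gamma\, s_i^\beta}{c_0}.
\end{equation*}
This is strictly positive because $\gamma>0$ by hypothesis, $c_0>0$ by assumption, and $s_i\in(0,1]$ with $\beta\ge 0$ gives $s_i^\beta>0$. Since the derivative is a positive constant, $S_i$ is strictly increasing on all of $K_A\ge 0$, not merely locally, and this holds for every manager $i$. As a cross-check, the same result follows from $\partial S_i/\partial K_A = -c_i^{-2}\,\partial c_i/\partial K_A$ together with the sign of $\partial c_i/\partial K_A$ already computed in the proof of Proposition~\ref{prop:output}.

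The only real obstacle is making sure the strict inequality survives at the boundaries. This needs $s_i^\beta>0$, which holds because skill is assumed to lie in $(0,1]$ rather than $[0,1]$. It also covers $\beta=0$, where $s_i^0=1$. I would state these points explicitly.

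Finally, I would note that the rate $\gamma s_i^\beta/c_0$ depends on $s_i$ whenever $\beta>0$. Spans therefore all expand, but at heterogeneous rates, and this sets up the $\beta$-dependent Propositions~3--5.
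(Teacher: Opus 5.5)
Your proposal is correct and matches the paper's proof: you invert $c_i$ to get the affine closed form $S_i = (1+\gamma K_A s_i^\beta)/c_0$, and you note that $\partial S_i/\partial K_A = \gamma s_i^\beta/c_0 > 0$. The paper adds a remark on the cross-partial $\gamma\beta s_i^{\beta-1}/c_0$, which corresponds to your closing note on heterogeneous rates but is not needed for the statement itself.
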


\begin{proof}
$S_i = (1 + \gamma K_A s_i^\beta)/c_0$, so $\partial S_i / \partial K_A = \gamma s_i^\beta / c_0 > 0$. The cross-partial $\partial^2 S_i / \partial K_A \partial s_i = \gamma \beta s_i^{\beta-1} / c_0$ is positive for $\beta > 0$: higher-skill managers see faster capacity growth. For $\beta > 1$, this cross-partial is itself increasing in $s_i$---the capacity gap between top and average managers accelerates, producing the ``Winner Takes All'' mechanism. (For $\beta < 1$, the cross-partial diverges as $s_i \to 0$; this is a boundary artifact of the power function and is bounded for any positive lower bound on managerial skill.)
\end{proof}

Propositions~1 and~2 are the core structural predictions: agent capital raises output and flattens hierarchy unconditionally. \citet{ewens2025} and \citet{babina2025} document exactly this pattern---firms that adopted AI shifted toward flatter hierarchies---providing the strongest available empirical support.

\begin{proposition}[Manager Demand]\label{prop:manager-demand}
  If $c_0 < 1$ (so that $S(0) = 1/c_0 > 1$), the number of managers required to supervise $N$ workers is strictly decreasing in $K_A$.
\end{proposition}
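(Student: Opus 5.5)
The natural route is to reduce the claim to Proposition~\ref{prop:span} through the layered-hierarchy formula~\eqref{eq:layers}, working first in the homogeneous case where every manager has the common span $S(K_A) = (1+\gamma K_A)/c_0$. With $N$ workers and span $S$, the bottom layer needs $\lceil N/S\rceil$ managers. The next layer needs $\lceil N/S^2\rceil$ managers, and so on, until a single manager remains at the top. The total managerial headcount is therefore
\begin{equation*}
  M(N,S) = \sum_{k=1}^{\text{Layers}(N,S)} \left\lceil \frac{N}{S^k} \right\rceil .
\end{equation*}
The hypothesis $c_0<1$ gives $S(0)>1$, so $S(K_A)>1$ for all $K_A\ge 0$. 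This keeps $\log S>0$, so that~\eqref{eq:layers} is finite and the geometric sum terminates. By Proposition~\ref{prop:span}, $S$ is strictly increasing in $K_A$. It therefore suffices to show that $M(N,S)$ is decreasing in $S$.

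The key steps, in order, are as follows.
\begin{enumerate}
  \item Record that $S'(K_A)=\gamma/c_0>0$, which is Proposition~\ref{prop:span}.
  \item Observe that each summand $N/S^k$ is strictly decreasing in $S$, and that the number of layers $\lceil \log N/\log S\rceil$ is weakly decreasing in $S$.
  \item Conclude that the continuous relaxation $\tilde M(S)=\sum_k N/S^k$, or more simply the first-layer count $N/S$, is strictly decreasing.
\end{enumerate}
For the heterogeneous case I would argue pointwise. Each $S_i(K_A)$ is strictly increasing, so the capacity constraint $\sum_i S_i \ge N$ is satisfied by a weakly smaller set of managers. Ordering managers by skill and taking the minimal prefix whose capacities cover $N$ gives a minimal count that is weakly decreasing in $K_A$.

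The main obstacle is the word \emph{strictly}. Because of the integer ceilings, the true manager count is a step function of $S$: it is weakly decreasing and strictly decreasing only across thresholds. I would handle this in one of two ways. The first is to state the result in the paper's continuous (fractional-manager) sense, where $M=N/S$, or $N c(K_A)$, has derivative $-N\gamma/(1+\gamma K_A)^2<0$ relative to $c_0$. More precisely, $\partial (N c)/\partial K_A = -N c_0\gamma/(1+\gamma K_A)^2<0$, which is strict. The second is to prove weak monotonicity for the integer count and add that it is strictly decreasing whenever $K_A$ crosses a threshold with $N/S$ integral. I would present the continuous version as the proposition's content, and remark on the integer version, with layers compressing via~\eqref{eq:layers}, as the discrete counterpart.
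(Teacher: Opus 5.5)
Your proposal is correct. It shares the paper's skeleton: reduce to Proposition~\ref{prop:span}, then show the manager count falls in the span, using the layered sum $\sum_\ell N/S^\ell$ for homogeneous spans and a capacity-covering argument ($\sum_i S_i$ rising) for the two-layer heterogeneous case. You execute the key step differently, however.

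The paper lets $L\to\infty$, replaces the sum by $N/(S-1)$, and differentiates to get the explicit rate $-N\gamma/[c_0(S-1)^2]$. You keep the exact finite sum instead. Every term falls in $S$, while $L(S)=\lceil\log N/\log S\rceil$ weakly falls, so fewer and smaller positive terms remain. Your route needs no large-$N$ approximation and accounts for the $S$-dependence of the number of layers, which the paper sidesteps by passing to the limit. The paper's route buys a closed-form derivative.

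You are right that with integer ceilings the count is only a weakly decreasing step function. So ``strictly'' is a statement about the fractional relaxation, and the paper's proof implicitly relies on that relaxation too.

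Two refinements:
\begin{enumerate}
\item $N/S=Nc(K_A)$ is the two-layer count and is strictly decreasing for any $c_0>0$. The hypothesis $c_0<1$ therefore does real work only in your $\tilde M$ argument, and that is the version to present as the proposition's content.
\item Your minimal-prefix argument for heterogeneous managers gives only weak monotonicity. The fractional covering count, which fills the last manager's capacity partially, is strictly decreasing, because $\gamma>0$ and $s_i>0$ make every $S_i$ strictly increasing in $K_A$.
\end{enumerate}
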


\begin{proof}
In a multi-layer hierarchy with homogeneous span $S > 1$ and $L$ layers, total managers are $M = \sum_{\ell=1}^{L} N/S^\ell$. For large $N$ (so that $L \to \infty$), this converges to $M \approx N/(S - 1)$; we use this limiting approximation for tractability. Since $S = (1 + \gamma K_A)/c_0$ and $c_0 < 1$ ensures $S > 1$ for all $K_A \geq 0$, we have $\partial M / \partial K_A = -N\gamma/[c_0(S - 1)^2] < 0$. The multi-layer formula extends beyond the two-layer production model of Section~2; we include it to show that the comparative static generalizes to deeper hierarchies. In the two-layer case that the production model and simulation directly implement, the result is more immediate: total supervisory capacity $D = \sum S_i$ is increasing in $K_A$ (since each $\partial S_i / \partial K_A > 0$ by Proposition~\ref{prop:span}), so fewer managers are needed to cover a given workforce. Under $\beta > 1$, the reduction is compositionally concentrated: top managers absorb a larger share of workers while bottom managers become redundant. (The simulation holds the manager pool fixed at $n = 20$, so Proposition~\ref{prop:manager-demand} manifests as excess managerial capacity rather than explicit displacement.)
\end{proof}

\begin{proposition}[Wage Dispersion]\label{prop:wage-dispersion}
  Under proportional worker allocation, for any $\beta > 0$, the Gini coefficient of managerial wages is strictly increasing in $K_A$ for $K_A > 0$. The rate of increase is itself increasing in $\beta$.
\end{proposition}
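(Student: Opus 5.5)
The plan is to reduce managerial wages to a monotone transform of the manager-specific capacities $S_i(K_A)$ and then use a Lorenz-ordering argument.

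\textbf{Reduction.} I interpret ``proportional worker allocation'' as $n_i = N S_i / \sum_k S_k$, with worker quality allocated so that $Q_i = \bar q\, n_i$ in expectation. The key simplification is that under this rule
\[
c_i n_i = \frac{N}{c_0^{-1}\sum_k (1+\gamma K_A s_k^\beta)} \cdot \frac{1}{S_i}\cdot S_i \cdot \frac{1}{1}
\]
does not depend on $i$: $c_i n_i = N/\sum_k S_k$ for every manager. The coordination denominator $1+c_in_i$ is therefore common to all teams. By the sharing rule, $w_i = (1-\alpha) A \bigl(Q_i/(1+c_in_i)\bigr)^\alpha$, and after dropping every factor common to all $i$,
\[
w_i \propto x_i(t)^\alpha, \qquad x_i(t) = 1 + t a_i, \quad a_i = s_i^\beta, \quad t = \gamma K_A .
\]
Under the binding-capacity closure $n_i = S_i$ of Section~2 one gets $w_i \propto (\bar q S_i/2)^\alpha$, which is the same expression. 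Since the Gini coefficient is scale invariant, the claim reduces to showing that $G\bigl((1+ta_i)^\alpha\bigr)$ is strictly increasing in $t>0$ whenever the $s_i$ are not all equal. The strict inequality cannot hold if all $s_i$ are equal, so that nondegeneracy condition must be stated explicitly.

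\textbf{Monotonicity in $K_A$.} For $a_i > a_j$, the ratio $(1+ta_i)/(1+ta_j)$ is strictly increasing in $t$, and so is its $\alpha$-th power. The ranking of managers is the same for every $t$. For comonotone vectors in which every pairwise ratio of higher-ranked to lower-ranked entries rises, the share held by any bottom-$k$ group strictly falls. This gives a strict Lorenz shift outward, and hence a strict increase in the Gini coefficient.

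This Lorenz step is the main technical point. I would prove it directly by writing the bottom-$k$ share as $\bigl(1 + \sum_{i>k} w_i / \sum_{j\le k} w_j\bigr)^{-1}$ and noting that each ratio $w_i/w_j$ with $i>k\ge j$ rises. As a sanity check for $\alpha=1$, there is the closed form $G(1+ta) = \frac{t\bar a}{1+t\bar a}\, G(a)$, which is visibly increasing in $t$.

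\textbf{Rate increasing in $\beta$.} This part is the weakest. I would establish it at the margin $K_A \to 0^+$, and then argue that it holds more generally in the simulated range. Differentiating at $t=0$ gives $\partial G/\partial t|_{0} = \alpha\, \bar a\, G(a)$ with $a_i = s_i^\beta$. Since $s_i \in (0,1]$, a higher $\beta$ makes $s^\beta$ more dispersed in the Lorenz sense, because the ratio $(s_i/s_j)^\beta$ rises in $\beta$ by the same ratio argument, so $G(a)$ increases. The factor $\bar a$, however, falls in $\beta$. The product $\bar a\, G(a) = \tfrac{1}{2n^2}\sum_{i,j}|a_i-a_j|$ is half the mean absolute difference, and it is not globally monotone in $\beta$. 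I therefore expect to need either a restriction on the skill support or a reinterpretation of ``rate'' as the elasticity $\partial \log G/\partial \log t$, with the non-monotone case flagged. This is the step I expect might fail as literally stated.
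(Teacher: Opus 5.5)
Your proof of the first claim is correct and follows the paper's route. Like the paper, you reduce to $w_i\propto(1+\gamma K_A s_i^\beta)^\alpha$ via $n_i=\lambda S_i$ and $Q_i\propto S_i$, show that every pairwise ratio rises, and conclude by Lorenz dominance. The differences are small but worthwhile. You check explicitly that $c_in_i=N/\sum_k S_k$ is common to all teams, whereas the paper simply states the reduced ratio. You also state the nondegeneracy condition that the paper leaves implicit. Finally, you write the bottom-$k$ share as a decreasing function of the cross ratios $w_i/w_j$, $i>k\ge j$. The paper instead normalizes by the bottom wage and shows that the bottom-$k$ average growth rate $\sum_{i\le k}\hat w_i'/\sum_{i\le k}\hat w_i$ lies below the overall one. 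Your version needs no derivatives and compares any $K_A'>K_A$ directly.

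On the second sentence of the statement, your suspicion is correct, and the paper does not close the gap either. Its proof ends with the remark that $s\mapsto s^\beta$ is convex for $\beta>1$ and concave for $\beta<1$. That remark does not show that $\partial G/\partial K_A$ rises with $\beta$.

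Your marginal formula $\partial G/\partial K_A|_{0^+}=\alpha\gamma\,\frac{1}{2n^2}\sum_{i,j}|s_i^\beta-s_j^\beta|$ already gives a counterexample on the paper's own calibration. Take $s_i=0.05\,i$, $i=1,\dots,20$, and consider $\frac{1}{n^2}\sum_i s_i^\beta(2i-n-1)$:
\begin{itemize}
\item it is about $0.175$ at $\beta=2$;
\item it is about $0.162$ at $\beta=3$;
\item it tends to $(n-1)/n^2\approx 0.048$ as $\beta\to\infty$.
\end{itemize}
So for all sufficiently small $K_A>0$ the manager Gini rises faster under $\beta=2$ than under $\beta=3$. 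For uniform skills on $(0,1]$ the same quantity is $\beta/[(\beta+1)(\beta+2)]$, which peaks at $\beta=\sqrt2$.

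Rather than looking for a support restriction, you should record that this claim is false as stated. A true replacement follows from the cross-ratio lemma you already have. By scale invariance, $G\to G(s^{\alpha\beta})$ as $K_A\to\infty$. This limit is strictly increasing in $\beta$, because each ratio $(s_i/s_j)^{\alpha\beta}$ with $s_i>s_j$ rises in $\beta$.

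Your elasticity reinterpretation also works when $\alpha=1$. Your closed form gives $\partial\log G/\partial\log(\gamma K_A)=1/(1+\gamma K_A\bar a)$, and $\bar a$ falls in $\beta$. However, this elasticity equals $1$ for every $\beta$ as $K_A\to 0^+$, so it does not rank regimes at the margin. It is also unverified for $\alpha\neq 1$.
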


\begin{proof}
Under proportional worker allocation---each manager receives $n_i = \lambda S_i$ workers (for common $\lambda$) drawn from a pool with equal average skill, so that team quality satisfies $Q_i/Q_j = S_i/S_j$---the wage ratio between managers $i$ and $j$ reduces to $R = [(1 + \gamma K_A s_i^\beta)/(1 + \gamma K_A s_j^\beta)]^\alpha$. (Under PAM, where high-skill workers are matched to high-span managers, $Q_i$ may grow faster than $S_i$ for top managers, reinforcing the dispersion result.) For $s_i > s_j$ and $\beta > 0$: $dR/dK_A = \alpha \cdot u^{\alpha-1} \cdot \gamma(s_i^\beta - s_j^\beta)/(1 + \gamma K_A s_j^\beta)^2 > 0$, where $u = (1 + \gamma K_A s_i^\beta)/(1 + \gamma K_A s_j^\beta)$, so all pairwise wage ratios are strictly increasing in $K_A$. At $K_A = 0$, all wages are equal (Gini $= 0$). For $K_A > 0$, normalize wages by the bottom manager's wage: $\hat{w}_i = (S_i/S_1)^\alpha$. Since $\hat{w}_1 = 1$ is fixed while every $\hat{w}_i$ for $i > 1$ is strictly increasing in $K_A$, the Lorenz share of the bottom $k$ managers is $L_k = \sum_{i=1}^{k} \hat{w}_i / \sum_{i=1}^{n} \hat{w}_i$. For $k < n$, we show $dL_k/dK_A < 0$. By the quotient rule, $dL_k/dK_A < 0$ iff $(\sum_{i=1}^k \hat{w}_i')(\sum_{i=1}^n \hat{w}_i) < (\sum_{i=1}^n \hat{w}_i')(\sum_{i=1}^k \hat{w}_i)$, where primes denote $K_A$-derivatives. The pairwise ratio result (above) implies that $\hat{w}_i'/\hat{w}_i$ is strictly increasing in $i$: managers with larger $s_i^\beta$ have strictly higher proportional wage growth rates. Since the bottom $k$ managers have the lowest growth rates, the weighted average $\sum_{i=1}^k \hat{w}_i' / \sum_{i=1}^k \hat{w}_i$ is strictly less than $\sum_{i=1}^n \hat{w}_i' / \sum_{i=1}^n \hat{w}_i$, which gives the required inequality. Therefore the Lorenz curve at $K_A' > K_A > 0$ lies strictly below the Lorenz curve at $K_A$, and the Gini rises monotonically. For $\beta > 1$, the mapping $s \mapsto s^\beta$ is convex, amplifying skill differences superlinearly (``Winner Takes All''); for $\beta \in (0, 1)$, the mapping is concave and the Gini increase is modest (``Gentle Compression'').
\end{proof}

\noindent\textbf{Remark.} The wage dispersion result depends on the Cobb-Douglas rent-sharing convention ($w_M = (1-\alpha) Y_i$, worker shares proportional to quality). Under alternative wage determination (e.g., Nash bargaining over team surplus), the magnitude and potentially the direction of this result may differ. The qualitative finding that differential span expansion amplifies managerial wage heterogeneity is more general than the specific Gini characterization.

This is the formal mechanism behind the regime fork. The same technology ($K_A$) produces rapid or modest inequality growth among managers depending on $\beta$---superstar concentration when $\beta > 1$, broad compression when $\beta < 1$. A subtlety emerges at the economy-wide level: when the simulation includes heterogeneous workers (\S\,\ref{sec:economy-wide}), the full-economy Gini---encompassing managers, workers, and the unemployed---falls in all regimes as $K_A$ rises, because coordination compression expands employment (Proposition~\ref{prop:frontier}), moving workers from zero income to positive income. The regime fork at the economy level concerns the \emph{rate} of inequality reduction: economy-wide Gini falls faster under low $\beta$ than under high $\beta$. The manager--worker wage gap, however, rises in all regimes---coordination compression universally concentrates returns in the coordinating layer.

\begin{proposition}[Task Frontier Expansion]\label{prop:frontier}
  If $\delta > 0$, the task frontier $T(K_A)$ is strictly increasing in $K_A$, and total employment $E(K_A) = \min\bigl(\sum_i S_i(K_A),\; T(K_A),\; N\bigr)$ is weakly increasing in $K_A$.
\end{proposition}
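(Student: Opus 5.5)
The proof splits into two parts: monotonicity of the frontier $T$, and then monotonicity of employment $E$ as a minimum of monotone functions.

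For the first part, differentiate the specification~\eqref{eq:frontier} directly: $T'(K_A) = T_0\delta$. This is strictly positive whenever $\delta > 0$, provided the baseline frontier satisfies $T_0 > 0$. That positivity is implicit in calling $T_0$ a frontier, and I will state it explicitly as part of the setup. Following the footnote to~\eqref{eq:frontier}, I will also note that only $T' > 0$ is used later. The argument therefore carries over verbatim to the concave alternative $T_0(1+\delta\log(1+K_A))$, whose derivative is $T_0\delta/(1+K_A) > 0$.

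For the second part, I view $E$ as the pointwise minimum of three functions of $K_A$ and check that each is weakly increasing.
\begin{enumerate}[label=(\roman*)]
  \item Aggregate supervisory capacity $D(K_A) = \sum_i S_i(K_A)$ is strictly increasing. By Proposition~\ref{prop:span}, each $\partial S_i/\partial K_A = \gamma s_i^\beta/c_0 > 0$, and a finite sum of strictly increasing functions is strictly increasing.
  \item $T(K_A)$ is strictly increasing by the first part.
  \item $N$ is constant, and hence weakly increasing.
\end{enumerate}
The key lemma is elementary: if $f,g,h$ are weakly increasing, then so is $\min(f,g,h)$. To see this, take $K_A' > K_A$. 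For whichever function attains the minimum at $K_A'$, say $f$, we have $E(K_A') = f(K_A') \ge f(K_A) \ge E(K_A)$. This argument needs no differentiability, so it handles the kinks at points where the binding constraint switches. Where $E$ is differentiable, one can add that $\partial E/\partial K_A$ equals the derivative of the binding term. That derivative is strictly positive when $D$ or $T$ binds, and zero when $N$ binds. This explains why the statement says ``weakly'' rather than ``strictly'' increasing.

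The main obstacle is interpretive rather than technical. The formula for $E$ is posited, not derived, so I must ensure the proof does not implicitly rely on the two-layer hierarchy or on the capacity-constraint closure. In particular, Proposition~\ref{prop:manager-demand} says managers become redundant. One might worry that the displaced managers reduce employment, or that the manager pool feeding $\sum_i S_i$ shrinks. I would handle this by fixing the manager set, as the simulation does with $n=20$, so that $D$ is summed over a fixed index set. I would then remark that manager counts are excluded from $E$ as defined, since $E$ counts production positions. A remark would also note that the result holds when $\delta = 0$: the $T$ term is then constant and the $\min$ argument goes through unchanged. The hypothesis $\delta>0$ is needed only for the strict monotonicity of $T$.
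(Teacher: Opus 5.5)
Your proposal is correct and follows the paper's own proof. You compute $T' = T_0\delta > 0$, then treat employment as the minimum of $D=\sum_i S_i$ (increasing by Proposition~\ref{prop:span}), $T$, and the constant $N$; your direct lemma that a pointwise minimum of weakly increasing functions is weakly increasing is tidier than the paper's case argument (demand-constrained at low $K_A$, then continuity of $D$, $T$, $N$ to rule out downward jumps at regime switches), since it needs neither continuity nor an assumption about which constraint binds first, and it shows $\delta>0$ is needed only for strict monotonicity of $T$.
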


\begin{proof}
Part~1: $dT/dK_A = T_0 \delta > 0$, so the frontier is strictly increasing. Part~2: Employment $E = \min(D, T, N)$ where $D = \sum S_i$ is total supervisory capacity. At low $K_A$, the economy is demand-constrained ($E = D$, increasing by Proposition~\ref{prop:span}). As $K_A$ rises and $\delta > 0$, the frontier ceiling also rises, so $E$ increases whether the binding constraint is capacity or the frontier. Since $D$, $T$, and $N$ are continuous in $K_A$, $E = \min(D, T, N)$ is continuous, ensuring no downward jumps at transitions between constraint regimes. The unemployment rate $u = 1 - E/N$ is weakly decreasing. The parameter $\delta$ determines whether coordination compression creates fixed-pie reallocation ($\delta = 0$: employment hits the ceiling $T_0$) or expanding-pie absorption ($\delta > 0$: new tasks absorb displaced labor). We note that this employment result is, by construction, a consequence of the monotonicity assumptions on $D$ and $T$; it characterizes the model's mechanism rather than delivering a surprising prediction. The economy-wide inequality reduction reported in Section~\ref{sec:economy-wide} follows from this extensive-margin expansion (moving workers from zero to positive income) and should be interpreted as such.
\end{proof}

\section{Regime Fork and Simulation Evidence}

\subsection{The \texorpdfstring{$2\times2$}{2x2} Taxonomy}

The interaction of elite complementarity ($\beta$) and endogenous task creation ($\delta$) generates four qualitatively distinct regimes:

\begin{table}[ht]
  \centering
  \caption{Four Regimes of Coordination Compression}\label{tab:regimes}
  \begin{tabular}{@{}lcc@{}}
    \toprule
    & Low $\delta$ (fixed task space) & High $\delta$ (task creation) \\
    \midrule
    Low $\beta$ (general infrastructure) & Gentle Compression & Rising Tide \\
    High $\beta$ (elite complementarity) & Winner Takes All & Creative Destruction \\
    \bottomrule
  \end{tabular}
\end{table}

\begin{description}[style=nextline]
  \item[Gentle Compression (low $\beta$, low $\delta$)]
    Agent capital benefits all managers roughly equally. No new tasks are created. Hierarchy compresses gently; inequality stays flat or falls slightly; output rises moderately.
  \item[Rising Tide (low $\beta$, high $\delta$)]
    Broad access plus new task creation. Displaced coordination workers are absorbed into new roles. Output rises substantially; inequality remains contained. This is the optimistic ``democratic expansion'' scenario.
  \item[Winner Takes All (high $\beta$, low $\delta$)]
    Agent capital amplifies top managers selectively. No new tasks absorb displaced workers. Superstar managers scale across larger spans while others stagnate. Output rises---but inequality rises sharply. This extends \citeauthor{rosen1981}'s (\citeyear{rosen1981}, \citeyear{rosen1982}) superstar mechanism with a coordination channel.
  \item[Creative Destruction (high $\beta$, high $\delta$)]
    New tasks appear but top talent dominates them. The simultaneous expansion and concentration produces the most volatile dynamics: high output growth with high inequality.
\end{description}

\subsection{Simulation Setup}

We simulate a 20-manager newsroom with heterogeneous workers to illustrate the propositions for one calibration. The specific magnitudes reported below are not predictions---they depend on parameter choices. The propositions establish directions for general parameter values; the simulation makes those directions tangible. Table~\ref{tab:parameters} lists the simulation parameters.

\begin{table}[ht]
  \centering
  \caption{Simulation Parameters}\label{tab:parameters}
  \small
  \begin{tabular}{@{}lp{3.8cm}p{5.5cm}@{}}
    \toprule
    Parameter & Value & Interpretation \\
    \midrule
    Seed & 2026 & Reproducibility \\
    $A$ & 1.0 & TFP (normalized) \\
    $\alpha$ & 0.65 & Labor share in production \\
    $c_0$ & 0.3 & Baseline coordination friction \\
    $\gamma$ & 1.0 & Coordination compression rate \\
    $T_0$ & 200 & Baseline task frontier \\
    $n_{\text{managers}}$ & 20 & Manager pool \\
    Manager skills & $[0.05, 1.0]$ evenly spaced & Managerial skill heterogeneity \\
    $N_{\text{workers}}$ & 400 & Worker pool \\
    Worker skills & $\text{Beta}(2, 5)$ quantiles & Right-skewed (mean 0.286) \\
    Assignment & PAM & Best workers to best managers \\
    Allocation & Largest-remainder & Discrete worker assignment \\
    $K_A$ range & $[0, 10]$ by 0.2 & Agent capital sweep \\
    $\beta_{\text{low}} / \beta_{\text{high}}$ & 0.2 / 3.0 & Infrastructure / elite complementarity \\
    $\delta_{\text{low}} / \delta_{\text{high}}$ & 0.0 / 0.3 & Fixed tasks / task creation \\
    \bottomrule
  \end{tabular}
\end{table}

Each manager $i$ receives effective agent capital $K_{\text{eff},i} = K_A \cdot s_i^\beta$, determining their individual coordination cost $c_i = c_0/(1 + \gamma \cdot K_{\text{eff},i})$ and ideal span $S_i = 1/c_i$. Workers are allocated to managers via largest-remainder integer assignment, then sorted by positive assortative matching (PAM): the highest-skill workers go to the managers with the highest effective agent capital, following \citet{garicano2006}. At each $K_A$, the number of employable workers is $\min(N_{\text{workers}}, T(K_A))$; the remainder are unemployed at wage zero.

\subsection{Results}

We report simulation results as structural findings about the parameter space, characterizing what the model generates across regimes rather than predicting any particular empirical setting.

\paragraph{The regime fork is continuous.} Sweeping across the full $(\beta, \delta)$ plane at $K_A = 5$ confirms that the four named regimes are zones in a continuous parameter space, not discrete categories (Figure~\ref{fig:heatmap}). The Gini surface rises steeply with $\beta$ and is relatively insensitive to $\delta$; the output surface rises with both $\beta$ and $\delta$ but more steeply with $\delta$. The distributional consequences of coordination compression depend on a continuous parameter ($\beta$), not a binary switch.

\begin{figure}[ht]
  \centering
  \includegraphics[width=\textwidth]{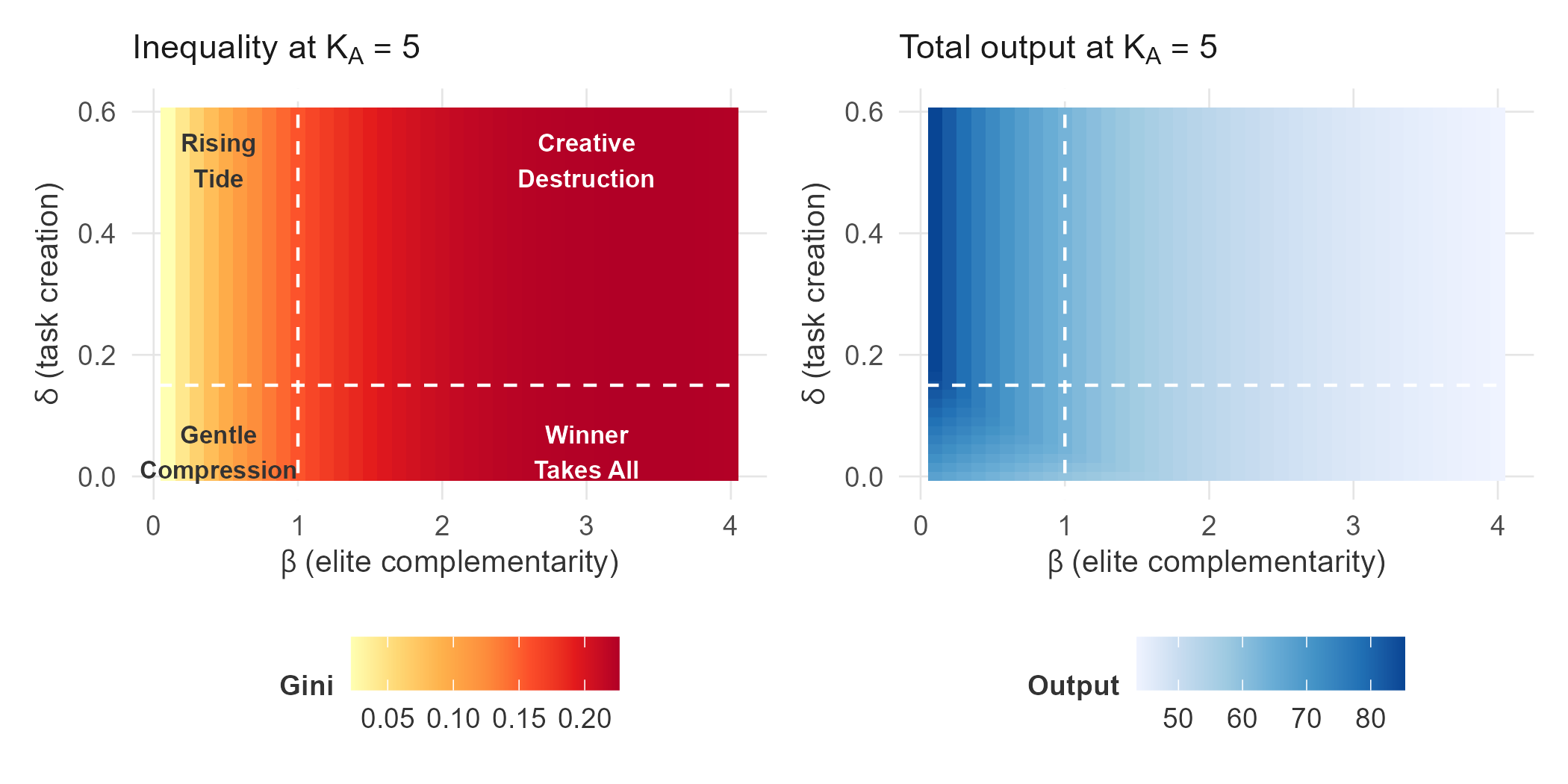}
  \caption{Regime heatmap at $K_A = 5$ showing manager Gini (left) and output (right) across the $(\beta, \delta)$ plane. The inequality surface rises primarily along the $\beta$ axis; the output surface rises along both axes.}\label{fig:heatmap}
\end{figure}

\paragraph{Economy-wide inequality reveals what the managerial layer hides.} The full-economy Gini---encompassing managers, employed workers, and the unemployed---is dramatically higher than the manager-only Gini, starting near 0.83 across all regimes at $K_A = 0$ and declining to as low as 0.43 under Rising Tide at $K_A = 10$, compared with manager-only Gini values of 0.07 to 0.40. Most inequality in this economy is \emph{between layers} (managers vs.\ workers), not within the managerial layer. The regime fork holds economy-wide: inequality falls in all regimes as $K_A$ rises and employment expands, but falls faster under low $\beta$ (Rising Tide) than under high $\beta$ (Winner Takes All).

\begin{figure}[ht]
  \centering
  \includegraphics[width=\textwidth]{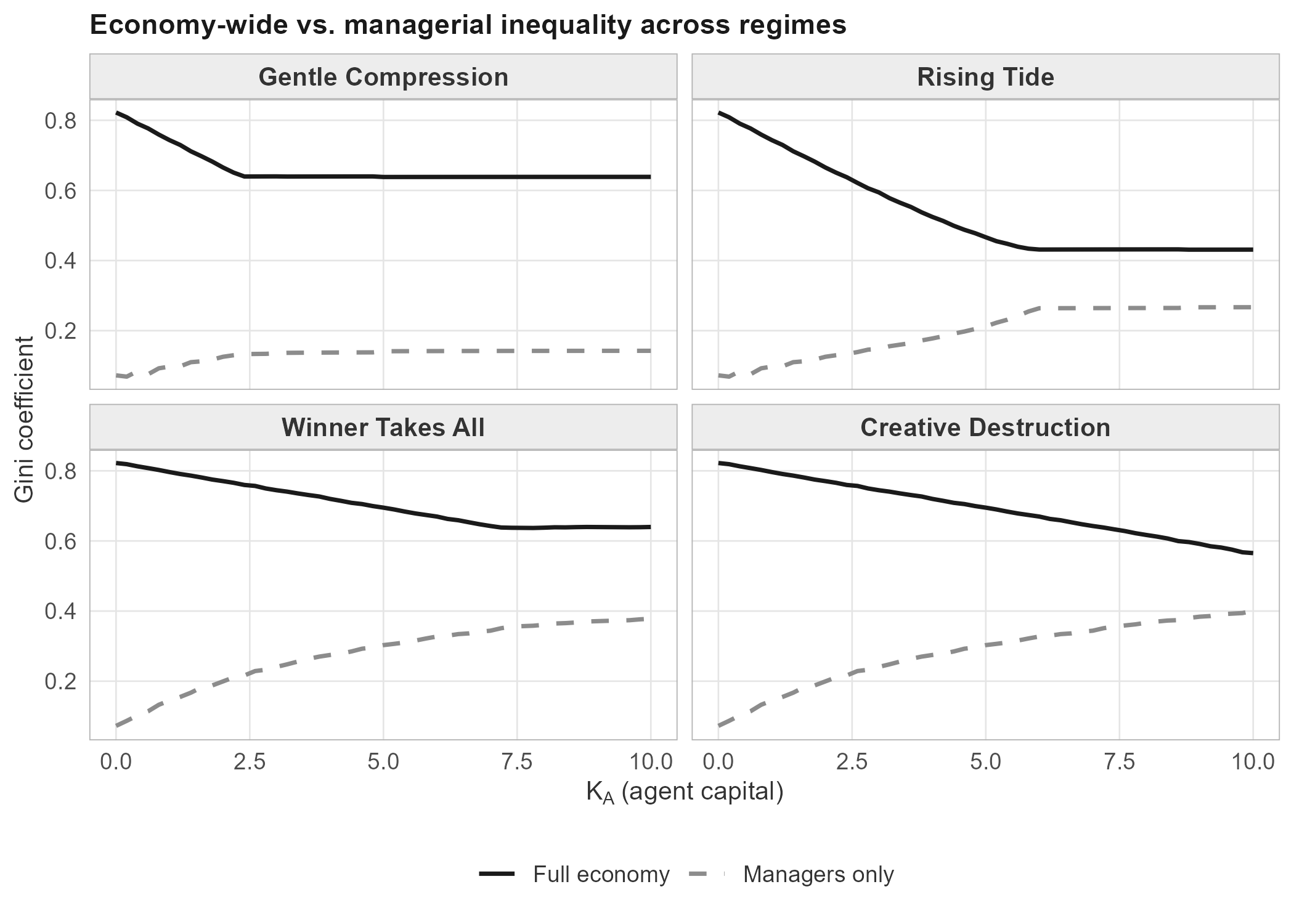}
  \caption{Full-economy Gini (solid) vs.\ manager-only Gini (dashed) across the four regimes. The gap between the lines reveals the between-layer inequality that manager-only metrics miss. The regime fork persists at the economy level: low-$\beta$ regimes show faster inequality reduction as $K_A$ rises.}\label{fig:economy-gini}
\end{figure}

\paragraph{Wage distributions reshape as coordination compresses.} Figure~\ref{fig:wage-distributions} illustrates how distributional shapes diverge across regimes. At $K_A = 0$, workers and managers occupy overlapping wage ranges---teams are small, output is modest, and the coordination premium is limited. As $K_A$ rises, the distributions separate. Under Rising Tide, the worker distribution concentrates at low wages as employment expands and newly absorbed lower-skill workers earn modest incomes, while the manager distribution extends rightward. Under Winner Takes All, worker wages concentrate near zero while the top manager's team earns substantially more---the distributional shape itself encodes the regime.

\begin{figure}[ht]
  \centering
  \includegraphics[width=\textwidth]{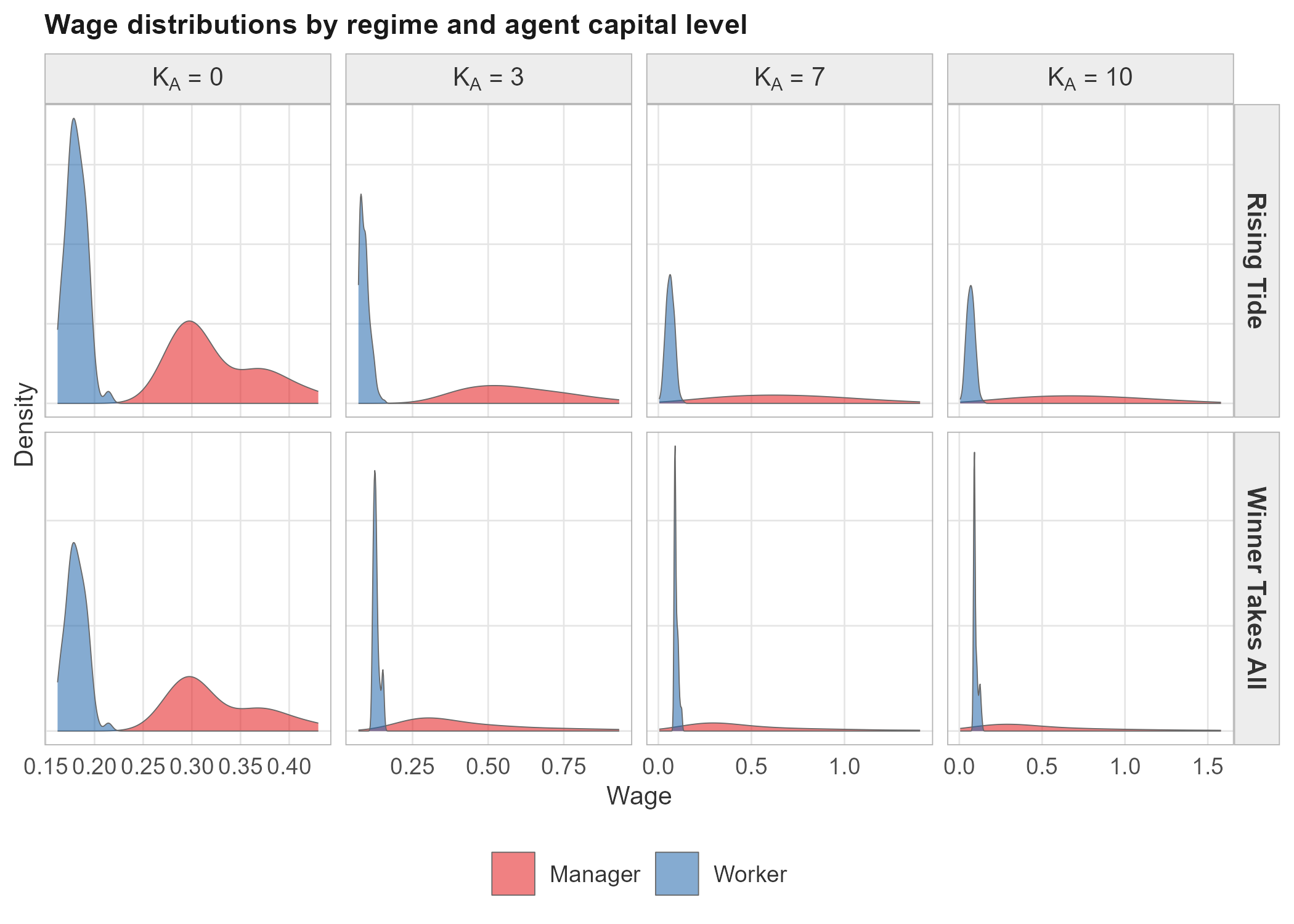}
  \caption{Wage densities for employed workers (blue) and managers (red) at four levels of $K_A$ (0, 3, 7, 10), comparing Rising Tide (top row) and Winner Takes All (bottom row). The progressive separation of manager and worker distributions visualizes the regime fork in distributional shape.}\label{fig:wage-distributions}
\end{figure}

\paragraph{Six metrics, one story.} A dashboard (Figure~\ref{fig:dashboard}) tracking output, economy-wide Gini, manager Gini, the manager--worker wage gap, top-10\% earnings share, and unemployment rate across all four regimes confirms that the regime fork is robust across multiple outcome measures. Output rises in all regimes (Proposition~\ref{prop:output}). Manager Gini diverges sharply by $\beta$ (Proposition~\ref{prop:wage-dispersion}). Unemployment falls as the task frontier expands, especially under high $\delta$ (Proposition~\ref{prop:frontier}).

\clearpage
\begin{figure}[H]
  \centering
  \includegraphics[width=\textwidth, height=0.78\textheight, keepaspectratio]{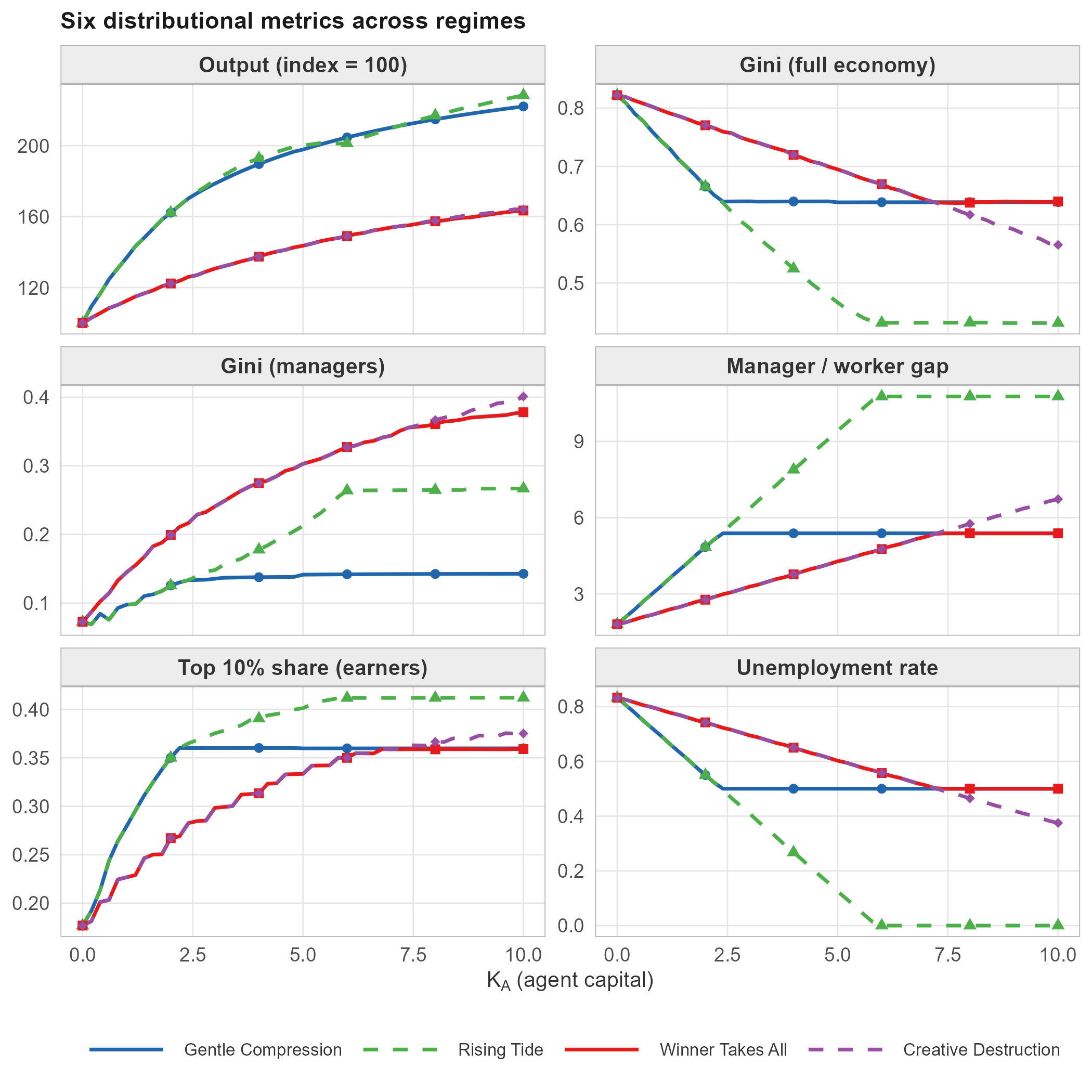}
  \caption{Six-panel dashboard showing output index, economy Gini, manager Gini, manager--worker wage gap, top-10\% earnings share, and unemployment rate across all four regimes under PAM. Lines are distinguished by color, linetype (solid vs.\ dashed), and point markers at integer $K_A$ values. Each metric tells a consistent story; taken together, they confirm that the regime fork is the paper's central structural finding.}\label{fig:dashboard}
\end{figure}

\subsection{Economy-Wide Distributional Results}\label{sec:economy-wide}

The heterogeneous-worker extension yields three findings that the manager-only simulation cannot produce.

First, \textbf{between-layer inequality dominates within-layer inequality}. As Figure~\ref{fig:economy-gini} shows, analyses focused on managerial earnings systematically understate the inequality generated by coordination compression. The manager--worker divide is the primary source of inequality, and it widens as coordination compression allows managers to leverage larger teams.

Second, \textbf{the manager--worker wage gap rises in all regimes}. For this calibration, the mean manager earns approximately $1.8\times$ the mean employed worker at $K_A = 0$, rising to approximately $5.4\times$ at $K_A = 10$ under Gentle Compression. The direction is not regime-dependent: coordination compression universally concentrates returns in the coordinating layer, because managers capture a fixed share of team output ($(1-\alpha)Y_i$) while the worker share ($\alpha Y_i$) is divided among all team members. As spans expand, each manager's output grows while workers share among more colleagues. This result is a direct consequence of the Cobb--Douglas rent-sharing convention: the manager captures $(1-\alpha)$ of a growing pie while the worker share is diluted across an expanding team. Under alternative wage determination (e.g., Nash bargaining or competitive labor markets with outside options), the gap trajectory could differ. The qualitative mechanism---that coordination compression amplifies the organizational leverage of the coordinating layer---is more general than the specific magnitude.

Third, \textbf{employment expansion is the primary equalizing force}. At $K_A = 0$, only approximately 67 of 400 workers are employed (demand $= 20$ managers $\times$ span $\approx 3.3$). As $K_A$ rises, spans expand and---especially when $\delta > 0$---the task frontier widens, absorbing previously unemployed workers at positive wages. This is Proposition~\ref{prop:frontier} made concrete: the economy-wide Gini falls primarily because workers move from zero income to positive income, not because employed workers' wages converge. Under Rising Tide (low $\beta$, high $\delta$), the unemployment rate falls from over 80\% to near zero; under Winner Takes All (high $\beta$, low $\delta$), the decline is slower because the fixed task frontier limits absorption.

A note on Proposition~\ref{prop:output} and the simulation: while the proof establishes $\partial Y / \partial K_A > 0$ holding team allocations fixed, the simulation uses proportional allocation with an expanding workforce. When the task frontier grows ($\delta > 0$), newly admitted low-skill workers can temporarily add more coordination friction than quality, producing local output decreases along the $K_A$ path. This is a feature of the allocation rule, not a violation of the proposition: at any fixed $K_A$, output is higher than it would be at lower $K_A$ with the same workforce.

In an economy at full employment, the extensive margin disappears and the regime fork's distributional consequences would operate entirely through the intensive margin---wage dispersion among the employed---which rises in all regimes. The finding that economy-wide Gini \emph{falls} is specific to settings where coordination compression expands employment substantially.

Results are qualitatively robust to random (non-assortative) worker assignment: all five qualitative checks (output rising, manager Gini ordering, economy-wide Gini falling, gap rising, employment ordering) pass under random matching for all four regimes. PAM is the primary specification because it arises as an equilibrium in knowledge hierarchies where higher-skill managers generate larger team surplus and can bid more for higher-skill workers~\citep{garicano2006}. In the present model, PAM is imposed rather than derived from a competitive assignment game, and the regime fork's qualitative structure does not depend on it. PAM sharpens the magnitudes---particularly the Gini levels under high $\beta$---but the directional results are driven by coordination compression, not the assignment rule.

\subsection{Robustness to Labor Share}

The simulation uses $\alpha = 0.65$ as the baseline labor share. To confirm that the regime fork's qualitative ordering is not an artifact of this choice, we re-run the simulation at $\alpha \in \{0.50, 0.65, 0.80\}$ and check five qualitative features at $K_A = 10$: (1)~output rises in all regimes, (2)~manager Gini is higher under high $\beta$ than low $\beta$, (3)~economy-wide Gini falls in all regimes, (4)~the manager--worker gap rises in all regimes, and (5)~employment is weakly higher under high $\delta$ than low $\delta$. Table~\ref{tab:robustness} reports the results.

\begin{table}[ht]
  \centering
  \caption{Regime Fork Robustness to Labor Share ($\alpha$).\protect\footnotemark}\label{tab:robustness}
  \begin{tabular}{@{}clccccc@{}}
    \toprule
    $\alpha$ & Metric & GC & RT & WTA & CD & Ordering holds? \\
    \midrule
    0.50 & Mgr Gini & 0.110 & 0.213 & 0.295 & 0.315 & WTA $>$ GC, CD $>$ RT \\
    0.50 & Gap ($\times$) & 10.0 & 20.0 & 10.0 & 12.5 & All $>$ baseline \\
    0.50 & Employed & 200 & 400 & 200 & 250 & RT $\geq$ GC, CD $\geq$ WTA \\
    \addlinespace
    0.65 & Mgr Gini & 0.143 & 0.267 & 0.378 & 0.401 & WTA $>$ GC, CD $>$ RT \\
    0.65 & Gap ($\times$) & 5.4 & 10.8 & 5.4 & 6.7 & All $>$ baseline \\
    0.65 & Employed & 200 & 400 & 200 & 250 & RT $\geq$ GC, CD $\geq$ WTA \\
    \addlinespace
    0.80 & Mgr Gini & 0.175 & 0.316 & 0.454 & 0.479 & WTA $>$ GC, CD $>$ RT \\
    0.80 & Gap ($\times$) & 2.5 & 5.0 & 2.5 & 3.1 & All $>$ baseline \\
    0.80 & Employed & 200 & 400 & 200 & 250 & RT $\geq$ GC, CD $\geq$ WTA \\
    \bottomrule
  \end{tabular}
\end{table}

\footnotetext{Employment values are identical across $\alpha$ by construction: spans and the task frontier depend on $\gamma$, $\beta$, and $\delta$ but not on $\alpha$.}

All five qualitative checks pass at every $\alpha$ value (15/15). The magnitudes shift---higher $\alpha$ compresses the manager--worker gap and raises manager Gini---but the regime fork's qualitative ordering is invariant. This is expected: the propositions are derived for general $\alpha \in (0, 1)$, so the qualitative comparative statics should survive any interior value.

\section{Discussion and Policy Implications}

\subsection{Testable Predictions}

The model generates testable organizational predictions. Propositions~\ref{prop:span}--\ref{prop:manager-demand} predict declining management share and expanding spans of control as coordination-compressing AI tools diffuse---patterns documented by \citeauthor{ewens2025}'s (\citeyear{ewens2025}) finding that firms flattened hierarchies after AI adoption. The critical empirical question is how AI coordination adoption covaries with managerial skill, which determines $\beta$ and thus the regime. These constructs are operationalizable via O*NET task taxonomies and AI exposure indices, though causal identification of coordination compression effects remains an open design challenge.

\subsection{Three Policy Frames}

The coordination-compression model suggests three distinct policy frames for AI's labor market effects, each associated with a different view of the technology:

\begin{enumerate}
  \item \textbf{Replacement frame} $\to$ defensive redistribution (UBI, transition assistance). This addresses AI as a substitute for labor. If coordination compression eliminates middle management roles faster than new tasks absorb displaced workers (low $\delta$), redistribution may be necessary---but it treats the symptom, not the mechanism.

  \item \textbf{Tool frame} $\to$ human capital investment. This addresses AI as a complement to workers, following \citeauthor{agrawal2026}'s (\citeyear{agrawal2026}) emphasis on expanding $\Omega(i)$ through education and multidimensional skill development. Policy focuses on training workers to exploit the expanded task frontier. This is necessary but insufficient if organizational structure---not worker capability---is the binding constraint.

  \item \textbf{Coordination frame} $\to$ access and organizational design policy. This addresses AI as coordination-compressing capital. The critical policy lever is $\beta$: \emph{who has access to coordination-compressing AI?} If access is broad, the general infrastructure regime obtains---hierarchy flattens, bottlenecks clear, gains are distributed. If access is concentrated among elite managers, the superstar regime obtains---the same technology widens inequality. Policy instruments include: ensuring broad access to AI coordination platforms, antitrust attention to AI tool concentration, and organizational design standards that distribute coordination gains.
\end{enumerate}

\subsection{\texorpdfstring{$\beta$}{Beta} as the Policy Lever}

The regime fork elevates $\beta$ from a model parameter to a policy variable. Unlike $\gamma$ (which depends on technological capability) or $\delta$ (which depends on the nature of work), $\beta$ is partially shaped by institutional choices: platform pricing, licensing, training access, and organizational culture. This makes the distributional consequences of AI partially subject to policy influence---a more actionable framing than ``AI will displace X\% of workers.''

A natural question is whether $\beta$ could be \emph{negative}---whether AI might help low-skill coordinators disproportionately, leveling rather than amplifying skill differences. Empirical evidence supports this for \emph{production} tasks: \citet{brynjolfsson2025} find that generative AI narrows the performance gap between novice and expert customer service agents. But coordination tasks differ from production tasks in a critical respect. Production tasks are often codable---AI can provide expert-level scripts to novices. Coordination tasks require judgment under ambiguity: deciding what to delegate, monitoring heterogeneous outputs, resolving conflicts between workstreams. AI provides the \emph{capacity} to handle more coordination links, but the \emph{quality} of coordination decisions remains skill-dependent. The model restricts $\beta \geq 0$ because coordination complementarity (AI amplifies coordination skill) is more plausible than coordination substitution (AI replaces coordination judgment). If $\beta$ were negative, the mapping $s_i \mapsto s_i^\beta$ would invert the ranking: managers with lower raw skill would receive \emph{more} effective agent capital ($s_i^\beta > 1$ for $s_i < 1$ when $\beta < 0$). For managers with skills bounded away from zero, this would compress the span distribution relative to $\beta = 0$, producing a leveling effect. However, for $s_i$ near zero the power function $s_i^\beta$ diverges, so the distributional implications of $\beta < 0$ depend on the lower bound of the skill distribution---a complication the model avoids by restricting $\beta \geq 0$.

Crucially, $\beta$ operates at every level of the organization, not only the managerial layer. A frontline worker who orchestrates AI agents to manage a complex workflow---coordinating multiple tools, monitoring outputs, integrating results---is exercising the same coordination function the model attributes to managers. The policy question is not only whether elite managers get better AI, but whether the coordination capability to \emph{use} AI effectively is broadly distributed or concentrated. When $\beta$ is low, coordination compression is accessible to everyone who can direct an agent; when $\beta$ is high, only those with the skill, training, or institutional permission to orchestrate complex agentic workflows capture the gains.

\subsection{Limitations}

This paper has several limitations. First, the production function is stylized---a two-layer firm with Cobb--Douglas structure does not capture the complexity of real organizational design. In particular, the fixed factor shares ($\alpha$, $1-\alpha$) mechanically link the manager--worker gap to team size; under a CES specification with elasticity of substitution $\sigma \neq 1$, the managerial income share could shrink as spans expand (if management and effective labor are substitutes, $\sigma > 1$), potentially attenuating the gap result. Full reversal would likely require additional changes beyond the production function alone (e.g., endogenous management intensity, alternative surplus-sharing, or non-binding capacity constraints). The regime fork's qualitative ordering is robust to $\alpha$ (Table~\ref{tab:robustness}), but testing robustness to $\sigma$ is an important extension. Second, the simulations are illustrative: they demonstrate that the model generates the regime fork, but calibration to specific industries or firms requires richer data than is currently available. Third, the simulation models a single multi-manager firm, not a general-equilibrium economy---inter-firm competition, labor market clearing, and endogenous entry are abstracted away. Fourth, worker skills are exogenous and fixed; in practice, AI may alter the returns to skill acquisition, endogenizing the skill distribution. Fifth, the model abstracts from firm-boundary effects~\citep{coase1937, williamson1979}---coordination compression may shift the make-or-buy margin in ways not captured here. Sixth, coordination cost is strictly decreasing in $K_A$ with no floor; in practice, verification and auditing overhead from managing large numbers of concurrent processes may impose a lower bound on effective coordination friction, attenuating span expansion at high $K_A$ without altering the regime fork's qualitative structure. Seventh, the elite complementarity parameter $\beta$ performs double duty: it simultaneously governs how pre-existing skill differences translate into coordination capacity and how AI access is distributed. In particular, the formulation $K_{\text{eff},i} = K_A \cdot s_i^\beta$ implies that at $K_A = 0$, all managers have identical coordination costs regardless of skill, producing a Gini of zero as a baseline normalization rather than reflecting realistic pre-existing inequality. Separating baseline managerial heterogeneity in coordination capacity from the AI complementarity rate---for instance, via $c_i(K_A) = c_0 \cdot s_i^{-\eta} / (1 + \gamma K_A s_i^\beta)$ where $\eta > 0$ captures pre-AI skill sorting---would allow the model to distinguish between pre-existing organizational inequality and AI-induced amplification. We leave this extension for future work.

\section{Conclusion}

This paper extends the task-based framework for AI and labor by introducing coordination compression as a distinct channel through which agentic AI reshapes organizations. Agent capital ($K_A$) reduces the friction of managing workers, expanding spans of control, compressing hierarchies, and---under positive task creation elasticity---expanding the frontier of feasible work. The model generates five propositions and a regime fork: the same technology produces broad-based gains or superstar concentration depending on a single parameter ($\beta$) that indexes who benefits from coordination compression. Numerical simulations with heterogeneous managers and workers confirm sharp regime divergence across a $2\times2$ parameter space. Within the simulated firm, the regime fork extends beyond the managerial layer: coordination compression reduces the share of unassigned workers and widens the manager--worker gap across all regimes, but the pace of inequality reduction depends critically on $\beta$. Whether these within-firm patterns aggregate to economy-wide effects depends on general-equilibrium forces---inter-firm competition, labor market clearing, and endogenous entry---that the current partial-equilibrium framework does not model.

The economic impact of AI depends not solely on the technological frontier, but on the elasticity of organizational structure in response to falling coordination costs---and on who controls that elasticity.

\bibliographystyle{apalike}
\bibliography{references}

\end{document}